\newcommand{\nn}{\nonumber}
\newcommand{\cv}[1]{(#1)}
\newcommand{\cvb}[1]{[#1]}
\newcommand{\cvc}[1]{\{#1\}}
\newcommand{\cvv}[1]{\vert #1\vert}
\newcommand{\cvV}[1]{\Vert #1\Vert}
\newcommand{\cvr}[1]{\left\langle #1\right\rangle}
\newcommand{\iden}{\mathbbm{1}}
\newcommand{\ptr}[2]{\text{tr}_{#1}\cvc{#2}}
\newcommand{\prob}{\mathbb{P}}
\renewcommand{\op}[1]{\hat{#1}}
\newcommand{\opH}{\op{H}}
\newcommand{\opA}{\op{A}}
\newcommand{\opU}{\op{U}}
\newcommand{\opE}{\op{E}}
\newcommand{\opS}{\hat{\sigma}}
\newcommand{\opP}{\op{P}}
\newcommand{\tra}{\text{tr}}
\theoremstyle{thmstyleone}%
\newtheorem{theorem}{Theorem}
\newtheorem{proposition}[theorem]{Proposition}%
\newtheorem{lem}[theorem]{Lemma}
\theoremstyle{thmstyletwo}%
\theoremstyle{thmstylethree}%
\begin{document}

\title[Statistics of projective measurement as a noncommutativity witness]{Statistics of projective measurement on a quantum probe as a witness of noncommutativity of  algebra of a probed system}


\author*[1,2]{\fnm{Fattah} \sur{Sakuldee}}\email{fattah.sakuldee@ug.edu.pl}

\author[2]{\fnm{{\L}ukasz} \sur{Cywi\'{n}ski}}\email{lcyw@ifpan.edu.pl}

\affil*[1]{\orgdiv{The International Centre for Theory of Quantum Technologies}, \orgname{University of Gda\'nsk}, \orgaddress{\street{{Jana Ba{\.z}y{\'n}skiego 1A}}, \city{Gda\'nsk}, \postcode{{80-309}}, \country{Poland}}}

\affil[2]{\orgdiv{Institute of Physics}, \orgname{Polish Academy of Sciences}, \orgaddress{\street{al.~Lotnik{\'o}w 32/46}, \city{Warsaw}, \postcode{02-668}, \country{Poland}}}

\abstract{We consider a quantum probe $P$ undergoing pure dephasing due to its interaction with a quantum system $S$. The dynamics of $P$ is then described by a well-defined sub-algebra of operators of $S,$ i.e. the ``accessible'' algebra on $S$ from the point of view of $P.$ We consider sequences of $n$ measurements on $P,$ and investigate the relationship between Kolmogorov consistency of probabilities of obtaining sequences of results with various $n,$ and commutativity of the accessible algebra. For a finite-dimensional $S$ we find conditions under which the Kolmogorov consistency of measurement on $P,$ given that the state of $S$ can be arbitrarily prepared, is equivalent to the commutativity of this algebra. These allow us to describe witnesses of nonclassicality (understood here as noncommutativity) of part of $S$ that affects the probe. For $P$ being a qubit, the witness is particularly simple: observation of breaking of Kolmogorov consistency of sequential measurements on a qubit coupled to $S$ means that the accessible algebra of $S$ is noncommutative.}

\keywords{Probe-system interaction; Pure-dephasing dynamics; Kolmogorov consistency; Commutativity; Quantumness witness}

\maketitle

\section{Introduction}
In quantum mechanics, one of the fundamental problems is to characterize the quantum and classical features of physical system of interest. One generic characteristic of the quantum system is the presence of noncommutative observables in the operator algebra of the system---the set of all operators forming a vector space \cite{vonNeumann1955,Alicki2008b,Frohlich2015,Aremua2018}. 
The noncommutative structure is the basic ingredient for deriving the Heisenberg uncertainty principle \cite{Griffiths1995,Sakurai1994}, and the violation of Bell's inequality \cite{Bell1964}, as well as classical-quantum discord and related measures \cite{Modi2012}, all of which are used to testify the quantumness of the underlying system. 
In this sense, a verification procedure for noncommutativity of the operator algebra can be considered as a witness for quantumness of the system \cite{Alicki2008a,Alicki2008b,Facchi2012,Facchi2013}. 

An obvious way to probe a system $S$ is to perform $n$ measurements on it (assuming for simplicity that a measurement of the same observable is repeated).
In general, for any given $n,$ we say the probability $\prob_n\cv{m_n,\ldots,m_1}$ of obtaining a sequence of $m_1,\ldots,m_n$ results, satisfies the Kolmogorov consistency (KC) \cite{Feller2008} if for every step $j$
	\begin{align}
	    &\sum_{m_j}\prob_n\cv{m_n,\ldots,m_{j+1},m_j,m_{j-1},\ldots,m_1}\nn\\{}&= \prob_{n-1}\cv{m_n,\ldots,m_{j+1},m_{j-1},\ldots,m_1}. \label{eq:KC}
	\end{align}
It is understood that the probability defined within this context, in general, does not satisfy KC \cite{Breuer2016,Shrapnel2018,Milz2019}, e.g. a joint probability of $n$-step measurement $\prob\cv{m_n,\ldots,m_1}$ defined within quantum framework will generally not satisfy the condition from Eq.~(\ref{eq:KC}). 
In addition, in Ref.~\cite{Strasberg2019} (and similarly in Ref.~\cite{Milz2019}) the local sequential projective measurements (possibly different measurement at different time step), intervening an evolution of open quantum system, and their statistics are considered. 
Eq.~\eqref{eq:KC} is an important property to define many objects, e.g. transition matrices or master equations. 
Hence its consequences on Markovian property of stochastic process become debatable within quantum mechanical setup. 
These lead to two controversial topics in fundamental research related to open quantum system, namely (i) the proper interpretation of Markovian property in quantum probability theory or the role of memory in the dynamics of the open quantum system \cite{Breuer2016,Shrapnel2018,Taranto2019L,Taranto2019A,Pollock2018b} and (ii) the theory of probability within the quantum context \cite{Frohlich2015,Accardi1981,Accardi1982}. 

Recently, it has been suggested that KC can be a well-defined criterion to characterize quantumness and classicality \cite{Strasberg2019,Smirne2019,Milz2019,Milz2020,Sakuldee2018b,Sakuldee2022}. For instance, in Ref.~ \cite{Milz2019}, KC or Eq.~\eqref{eq:KC} is employed as a definition of $n-$classicality for the length $n$ quantum processes $\cvc{\prob_n\cv{m_n,\ldots,m_1}}_{m_n,\ldots,m_1}.$ This observation is motivated by the general observation that, without KC, there can always be nontrivial effects, e.g. interference or quantum back-action, inside the system,  leading to nonclassical feature of the process. 
If the underlying dynamics is completely positive (CP-) divisible and invertible, KC becomes identical to incoherent property of the dynamics---a situation when the experimenter has no ability to access the coherence during the course of the evolution. 
This idea is exemplified in Ref.~\cite{Smirne2019} in both theoretically and experimentally on quantum random walk in optical setup, a quantum walk of a photon on artificial time lattices with attached quantum coin assigned by photon's polarization. For the close system or the dynamical map which is CP-divisible, and for the global measurement on both walker and coin spaces, the fulfilment of KC of the sequential measurement is the absence of quantum coherence. The latter demonstration also exemplifies the implementation of sequential measurements to testify classicality and quantumness of quantum system. 
{We remark here that employing Eq.~\eqref{eq:KC} per se as the definition of the classicality may introduce a fundamental issue concerning the objectivity of this property, i.e. certifying the classicality of the underlying system behind the process requires Eq.~\eqref{eq:KC} to hold for arbitrary $n.$ Checking this is impossible in practice. We will revisit this again in Sect.~\ref{subsec:objectivity}.}

While one can access a quantum system $S$ of interest directly---i.e.~perform a sequence of measurements after preparing it in a desired initial state---a more often encountered situation is when we can coherently control and projectively measure a smaller quantum system---{\it a probe} $P$----that is interacting with a larger and not directly accessible S. For example, qubits undergoing pure dephasing due to interaction with their environment (the system of interest) can be used as probes of dynamics of $S$ when they are subjected to appropriate unitary driving (see e.g.~\cite{Degen_RMP17,Szankowski_JPCM17} and references therein), or when subjected to a sequence of measurements \cite{Facchi_PRA04,Fink_PRL13,Bechtold2016,Zwick_PRAPL16,Sakuldee_classical_PRA20,Sakuldee_operations_PRA20,Do_NJP19,Muller_PLA20}. 

Here we take such a probe-system setup, in which $P$ is a qudit that undergoes dephasing during its evolution in the presence of coupling to a system $S$ of finite dimension. 
We investigate the connection between the Kolmogorov consistency of probabilities of obtaining sequences of results of measurements on $P$ and classicality (defined here as commutativity) of a sub-algebra of operators of $S$. An interesting feature of the $P$-$S$ setup discussed here is that such a sub-algebra of operators of $S$, the properties of which are in principle imprinted on dynamics of $P$, appears in a natural way---it is picked out by the form of $P$-$S$ coupling and the Hamiltonian of $S$. By considering such a setup we are thus asking a question ``is a system $S$ classical when it is observed from the point of view of a probe $P$?''.

The two main results of the paper are the following. 
(1) When all the observables $\opE_m$ on the system $S$ induced by projectives measurement on the probe $P$ for outcomes $m,$  are nondegenerate, fulfilment of all the Kolmogorov consistency conditions for sequences of measurements on $P$, for all possible states of $S$, is equivalent to the accessible algebra being commutative. (2) In the often-encountered in experiments case of probe being a qubit that couples through its $\hat{\sigma}_z$ operator to the system, fulfilment of KC conditions for sequences of repeated measurements along two orthogonal axes (e.g.~repeated measurement of $\hat{\sigma}_x$ and $\hat{\sigma}_y$ observables), for all states of $S$, is equivalent to commutativity of the accessible algebra of $S$. From these results we propose a witness of quantumness of $S$ from the point of view of $P$ (understood as noncommutativity of the accessible algebra): breaking of KC for measurements on a qubit, or for measurements on a $d$-dimensional probe with $d>2$ when  $\opE_m$ are nondegenerate, proves that $[\hat{H}_i,\hat{H}_j] \! \neq \! 0$ for some $i$ and $j$.  

The paper is organized in the following way. Mathematical framework and terminologies for classicality, measurement protocol on a qudit undergoing pure dephasing and related topics are given in Sect. \ref{sec:prelim}. 
We discuss there the commutativity as a notion of classicality of some region in the operator space and KC for the underlying process concerning such a region. 
Then, in Section \ref{sec:KC_com} we demonstrate the relation between these two for our $P$-$S$ model with $P$ undergoing pure dephasing.  In Sect.~\ref{sec:QW} we describe the ways in which the noncomutativity (nonclassicality) of the accessible algebra of $S$ can be witnessed by observing the breaking of KC conditions for measurements on $P$. Discussion of various issues related to witnessing nonclassicality of $S$ (e.g.~creation of $P$-$S$ entanglement, possibility of replacing the influence of $S$ on $P$ by classical noise, Leggett-Garg inequalities) is given in Sect.~\ref{sec:discussion}. 

\section{Accessible algebra and induced measurement sequences}\label{sec:prelim}
In this section we establish a mathematical platform for our problem. The basic definitions and formalism are given in Sect.~\ref{subsec:quantum_system}. 
The formulation of an open quantum dynamics and induced measurement operations on the system is introduced in Sect.~\ref{subsec:measurement_protocol}.
The measurement statistics for general sequential measurement is briefly given in Sect.~\ref{subsec:measurement_induced_observable} where the role of induced measurement is also exemplified for the qubit. Lastly, conditions for existence of a fixed point of measurement maps that we consider here, which will be used in subsequent Section, are discussed in Sect.~\ref{subsec:RU}.

\subsection{Quantum systems}\label{subsec:quantum_system}
In this work, we consider a linear system described by a Hilbert space $\mathcal{H}$ of finite dimension $d$, $\mathcal{H}\simeq\mathbb{C}^{d}$ \cite{Thirring1981}. We call a \emph{quantum system} a doublet $\cv{\mathcal{A},\rho}$ 
consisting of an operator algebra $\mathcal{A}=\mathcal{B}\cv{\mathcal{H}}\simeq\mathcal{M}_{d}\cv{\mathbb{C}}$ or a Banach algebra $\mathcal{B}\cv{\mathcal{H}}$ of bounded operators on such Hilbert space, equipped with a Hermitian conjugation $\cdot^\dagger$ and a norm $\cvV{\cdot},$ where $\mathcal{M}_{d}\cv{\mathbb{C}}$ denotes a ($d-$square) matrix representation (matrix algebra); and $\rho$ is a density matrix $\rho>0,$ $\tr\cv{\rho}=1,$ $\rho=\rho^\dagger.$   
Let $\mathcal{K}$ denote a map on the algebra and assume that it is completely positive and trace nonincreasing. Equivalently, for a given density matrix $\rho,$ we can represent such a map in the operator sum form $\mathcal{K}\cvb{\rho}=\sum_m \op{K}_m\rho\op{K}_m^\dagger,$ where its dual map on operator space takes the form $\mathcal{K}^\dagger\cvb{\opA}=\sum_m\op{K}_m^\dagger\opA\op{K}_m$ derived from the Schr\"odinger-Heisenberg correspondence $\tr\cv{\rho~\!\mathcal{K}^\dagger\cvb{\opA}}=\tr\cv{\mathcal{K}\cvb{\rho}\opA}.$ 
Note that, in this language, the output operator for a given 
$m,$ $\mathcal{K}_{m}\cvb{\rho}=\op{K}_m\rho\op{K}_m^\dagger$ needs not to be a density matrix but rather an observable (a non-negative operator) counterpart of a probability associated with $m$ in operator algebra.

The operator algebra $\mathcal{A}$ is said to be \emph{classical with respect to a state} $\rho,$ if $\tr\cv{\rho\cvb{\opA,\op{B}}}=0$ for all $\opA,\op{B}\in\mathcal{A}.$ If it is classical with respect to all possible states we say $\mathcal{A}$ is \emph{classical}. It is clear that if the algebra $\mathcal{A}$ is commutative it will be classical \cite{Facchi2012}, and vice versa. If $\mathcal{A}$ is classical (commutative), from the Gel'fand theorem \cite{Thirring1981,Alicki2008b}, such algebra will be isomorphic to a $C^*-$algebra of continuous functions on a certain compact subset (e.g. a set of characters of the algebra), or in other words, all operators inside $\mathcal{A}$ can be represented by a function over a field. 
We can define a sub-algebra $\mathcal{A}_{\mathcal{L}}$ from a given set of operators $\mathcal{L}=\cvc{\iden}\cup\cvc{\op{L}_i}_i,$ also known as a sub-algebra $\mathcal{A}_{\mathcal{L}}$ generated by $\mathcal{L},$ by taking all possible complex linear combinations of all possible  products of finite numbers of elements in the set $\mathcal{L}$ together with all limit points of any convergent sequences, and the set $\mathcal{A}_{\mathcal{L}}$ will become a matrix-sub-algebra, namely $\mathcal{A}_{\mathcal{L}}\subset\mathcal{A}$ \cite{Frohlich2015}. If $\mathcal{A}$ is classical, so is $\mathcal{A}_{\mathcal{L}}$ for all possible $\mathcal{L},$ but the converse needs not to be true. We remark here that throughout this paper the terms classical and commutative can be used interchangeably.

\subsection{Pure dephasing interaction and induced measurement map}\label{subsec:measurement_protocol}
We are interested in the case when the system ($S$) of interest $\cv{\mathcal{A},\rho}$ is not directly accessible to us---we can only control and measure another system---the probe ($P$) that is coupled to it. Such a setting arises in a natural way e.g.~when we consider a qudit probe $P$ coupled to its environment---in this situation the lack (or very limited nature) of control over the 
system $S$ is basically a definition of environment. For the composite system  $\mathcal{A}_P\otimes\mathcal{A}_S\simeq\mathcal{B}\cv{\mathcal{H}_P}\otimes\mathcal{B}\cv{\mathcal{H}_S}$ (where $\mathcal{A}_P\simeq\mathcal{B}\cv{\mathcal{H}_P}$ is an operator algebra of our controllable system of dimension $d$), we consider a unitary evolution given by 
	\begin{equation}
		\opU=\sum_{i=1}^d\ketbra{i}\otimes\opU_i \,\, \label{eq:pure-dephase-U}.
	\end{equation}
generated by qudit pure dephasing type Hamiltonian $\opH=\sum_{i=1}^d\ketbra{i}\otimes\opH_i,$ i.e. $\opU_i=e^{-it\opH_i}$ for time duration $t,$ where $\opH_i=\opH_i^\dagger$ are Hamiltonians in the system algebra $\mathcal{A}$ and $\cvc{\ketbra{i}}_i$ is a set of orthonormal projection on $\mathcal{H}_P.$ 
Such an evolution occurs when the Hamiltonian of the composite system is given by $\hat{H}_P+\hat{H}_S+\hat{V}_{P}\otimes\hat{V}_S$, where $\hat{H}_{S(P)}$ are the Hamiltonians of the system of interest (the qudit that is under our control), and $\hat{V}_{P}\otimes\hat{V}_S$ is the coupling between $P$  and $S$ . When $\hat{H}_{P}$ and $\hat{V}_P$ commute, their common eigenstates $\ket{i}$ are pointer states unperturbed 
by interaction with the system \cite{Zurek2003}, and $\opH_i = \epsilon_i + \hat{H}_S + v_i \hat{V}_S$, with $\epsilon_i$ ($v_i$) being eigenvalues of $\hat{H}_P$ ($\hat{V}_P$). 

Within this setup we define an \emph{accessible sub-algebra} $\mathcal{A}_{acc}=\mathcal{A}_{\mathcal{L}}$ with $\mathcal{L}=\cvc{\iden}\cup\cvc{\opH_i}_{i=1}^d.$ 
The term ``accessible'' is employed in the sense that the information content of the system  accessible by the pure dephasing Hamiltonian $\opH$ cannot exceed the algebra $\mathcal{A}_{acc};$ however, this is a slight abuse of terminology, since the whole $\mathcal{A}_{acc}$ may not be accessed arbitrarily by the probe. The accessible algebra is the central object defining the connection between the features of the system, and the dynamics of the probe. Since the probe is sensitive to the system only through evolution operator Eq.~\eqref{eq:pure-dephase-U}, elements of algebra of $S$ that are beyond $\mathcal{A}_{acc}$ have no influence on behaviour of $P.$  
We also assume  that the initial state $\rho$ of the system is arbitrary, so the characterization of the quantum pair $\cv{\mathcal{A},\rho}$ is reduced to the study of $\mathcal{A}_{acc}$ solely.

Let us now consider the prepare-evolve-measure protocol on the qudit and its reduced map on the system. Given a prepared state from an orthonormal basis $\cvc{\ket{\gamma}}$ of the Hilbert space $\mathcal{H}_P,$ and a measured state from (possibly) another orthonormal basis $\cvc{\ket{m}}$ (they are posterior states of the measurement of a quantity $\sum_m m\ketbra{m}{m}$ on the probe $P,$) and assuming  the product input state, $\ketbra{\gamma}\otimes\rho$, we arrive at an induced map on the system 
	\begin{equation}
		\op{K}_{m,\gamma}:=\ptr{P}{\cv{\ketbra{\gamma}{m}\otimes\iden}\op{U}}=\sum_i\overline{\gamma}(i)m(i)\op{U}_i \,\, \label{eq:Km}
	\end{equation}
where $\gamma(i)=\braket{\gamma}{i},$ $m(i)=\braket{m}{i}$ and $\overline{a}$ is a complex conjugate of a complex number $a.$ Note that $\ket{\gamma}$ or $\ket{m}$ need not be orthogonal to each other. The operations and the effects, completely positive maps and observables corresponding to the prepare-measure index pair $\cv{\gamma,m},$ can then be written as
		\begin{align}
			\mathcal{K}_{m,\gamma}\cvb{\rho} &= \op{K}_{m,\gamma}\rho\op{K}_{m,\gamma}^\dagger= \sum_{i,i'}\left \cvb{\overline{\gamma}(i)m(i)\overline{m}(i')\gamma(i')\right }\op{U}_i\rho\op{U}_{i'}^\dagger,\label{eq:K_map_gen}\\
			\op{E}_{m,\gamma} &= \sum_{i,i'}\left \cvb{\overline{\gamma}(i)m(i)\overline{m}(i')\gamma(i')\right }\op{U}_{i'}^\dagger\op{U}_i,\label{eq:Effect_gen}
		\end{align}
	for a given state $\rho.$ 
We term the prepare-evolve-measure protocol with the condition above as that of \emph{complete projections} type. 
For simplicity, we consider a fixed prepared state $\ket{\gamma};$ the effects $\cvc{\op{E}_{m,\gamma}}_m$ form then a positive operator-valued measures (POVMs) \cite{Nielsen2010}, where we also have
	\begin{equation}
		\sum_m\op{E}_{m,\gamma}=\iden,  \label{eq:sumE}
	\end{equation}
where we will drop the parameter $\gamma.$
In other words, the maps $\mathcal{K}_{m,\gamma}$ correspond to weak measurements on the system \cite{Wiseman2009}. 

As an example, we illustrate the above using measurements on a qubit. 
In such a case we write $\opH=\ketbra{\uparrow}\otimes\opH_\uparrow+\ketbra{\downarrow}\otimes\opH_\downarrow$ and $\opU=\ketbra{\uparrow}\otimes\opU_\uparrow+\ketbra{\downarrow}\otimes\opU_\downarrow$ where $\ket{\uparrow}$ and $\ket{\downarrow}$ are eigenstates of Pauli operator $\opS_z,$ and our prepared state is set as $\ket \gamma=\ket {+x}=\cv{\ket \uparrow + \ket \downarrow}/\sqrt{2}.$ We set also the measurement basis $\cvc{\ket{m}}$ from the eigenstates of $\opS_x=\ketbra{\uparrow}{\downarrow}+\ketbra{\downarrow}{\uparrow},$ denoted by $\ket{\pm x}$ with eigenvalues $\pm 1.$ A measurement in $Y$ axis from the eigenstates of $\opS_y=i\ketbra{\uparrow}{\downarrow}-i\ketbra{\downarrow}{\uparrow}$ denoted by $\ket{\pm y}$ with eigenvalues $\pm 1,$ can also be applicable in the same setting. 
In particular, we write 
    \begin{equation}
        \op{K}^X_\pm=\dfrac{1}{2}\cv{\op{U}_\uparrow \pm \op{U}_\downarrow},~~\opE^X_\pm=\dfrac{\iden}{2}\pm\dfrac{1}{4}\cv{\op{U}_\uparrow^\dagger\op{U}_\downarrow + \op{U}_\downarrow^\dagger\op{U}_\uparrow},
    \end{equation}
for $X-$ measurement and 
    \begin{equation}
        \op{K}^Y_\pm=\dfrac{1}{2}\cv{\op{U}_\uparrow \pm i\op{U}_\downarrow},~~\opE^Y_\pm=\dfrac{\iden}{2}\pm\dfrac{i}{4}\cv{\op{U}_\uparrow^\dagger\op{U}_\downarrow - \op{U}_\downarrow^\dagger\op{U}_\uparrow},
    \end{equation}
for $Y-$ measurement. 

\subsection{Measurement statistics and induced system observable}\label{subsec:measurement_induced_observable}
Now we consider the sequences of measurements on the probe  obtained by repeatedly applying the measurement map $\mathcal{K}_{m}.$ Given a sequence of measurement outcomes $\cv{m_n,\ldots,m_1} ,$ the probability for the measurement sequence can be obtained quantum mechanically \cite{Frohlich2015} as 
	\begin{align}
		\prob_n\cv{m_n,\ldots,m_1}&=\tr\left\cvb{\rho\!~\mathcal{K}^\dagger_{m_1}\circ\cdots\circ\mathcal{K}^\dagger_{m_n}\cvb{\iden}\right}\label{eq:prob_def}\\
		&=\tr\cv{\rho~\!\op{Q}_n\cv{m_n,\ldots,m_1}}\label{eq:prob_in_Q}
	\end{align}
where \[\op{Q}_n\cv{m_n,\ldots,m_1}=\op{R}_n^\dagger\cv{m_n,\ldots,m_1}\op{R}_n\cv{m_n,\ldots,m_1}\] and $\op{R}_n\cv{m_n,\ldots,m_1}=\op{K}_{m_n}\cdots\op{K}_{m_1}.$ 
The operator $\op{R}$ can be interpreted as a quantum history as it is an analogue of classical history in the standard stochastic analysis \cite{Frohlich2015,Smirne2019}, and indeed, the probability given above follows simply from the Born rule associated with the observable $\op{Q}_n.$ 
Let us remark that the measurement probability of single outcome in the sequence of $n>1$ cannot be assigned, except when the 
KC is satisfied by all the $\prob_n\cv{m_n,\ldots,m_1}$, e.g. one cannot construct a probability $\prob\cv{m_j}$ for $1\leq j\leq n$ out of the family $\cvc{\prob\cv{m_n,\ldots,m_1}}$ in general.

In the above setup, the measurements are taken on the probe qudit by a sequence of one dimensional projections with respect to the set of pure states $\cvc{\ket{m_k}}_{k=1}^n,$ and they thus induce a set of weak measurements on the system operator algebra $\mathcal{A}_{acc}.$ 
Let us remark that $\op{Q}_1(m_n)=\op{E}_{m_n}$ for the step n, and for the other steps the observable is a mapping of previous history by an induced measurement operation, i.e. $\op{Q}_{k}\cv{m_k,\ldots,m_1}=\mathcal{K}^\dagger_{m_k}\cvb{\op{Q}_{k-1}\cv{m_{k-1},\ldots,m_1}}.$ 
This is an example of an update of quantum history (trajectory) in the sense of Ref. \cite{Sakuldee2018b}. 
Note that within this formalism there will always be time arguments implicitly present in the definitions of evolution maps $\mathcal{K}_{m_k}$, and here the time intervals between preparation and measurement in each step of measurement protocol are identical. 

In the case of the probe being a qubit, it is natural to choose the measurements as projections along qubit's $X-$ or $Y-$ axes, with measurement axes being possibly distinct in each step.
The generic form of probability induced by the qubit can be written as $\prob_n\cv{m_n^{\alpha_n},\ldots,m_1^{\alpha_1}}$ where the superscript $\alpha_k \!= \! X,Y$ denotes the axis of measurement at the time step $k.$
We write $\prob^{X(Y)}_n\cv{m_n,\ldots,m_1}$ if all measurements are taken along the same axis $X(Y).$ 

\subsection{Nonselective measurement}\label{subsec:RU}
Let us discuss now the case of a nonselective measurement \cite{Wiseman2009,vonNeumann1955}, i.e.~one that is performed, but its results are ignored. 
This type of maps arises from the summation over intermediate outcomes in KC condition. It can be written as $\mathcal{K}=\sum_m\mathcal{K}_m,$ and this map will behave no differently from an identity map upon the observable of concern $Q,$ when KC is fulfilled. 
In our complete projections protocol, such a nonselective measurement map is given by
	\begin{align}
		\mathcal{K}\cvb{\rho} = \sum_{i=1}^d \cvv{\gamma\cv{i}}^2\opU_i\rho\opU_i^\dagger, \label{eq:non-select-Kd}\\
		\mathcal{K}^\dagger\cvb{\opA} = \sum_{i=1}^d \cvv{\gamma\cv{i}}^2\opU^\dagger_i\opA\opU_i,\label{eq:non-select-K}
	\end{align}
with $\sum_i\cvv{\gamma\cv{i}}^2=1$ set by prepared state. Both operations belong to the class of quantum operation called random unitary channels, which are completely positive, trace preserving, and unital maps (the maps that preserve the identity.) The spectral structure and analysis, and fixed point properties for the maps in this class were examined in detail in the literature, see e.g. \cite{Bialonczyk2018,Mendl2009,Zyczkowski1994,Audenaert2008}. 
When KC given in Eq.~\eqref{eq:KC} holds, the action of such a nonselective measurement is the same as that of an identity channel. Hence it is interesting to consider operators that are invariant under $\mathcal{K}^{\dagger}$.

In general, it is proven that, under some assumptions, for a given quantum operation (completely positive and trace preserving map), the set of all fixed points of the operation is identical to the set of operators commuting with all the Kraus operators associated with the operation (see Refs.~\cite{Arias2002,Heinosaari2010} for criteria needed for the equivalence). 
One case where such an equivalence holds, is a unital map on finite dimension (this includes the case of a random unitary map), which the structure of our map (see, for instance, Theorem 3.6 (b) in Ref.~\cite{Arias2002}).
In other words, for quantum operation $\Phi_{\mathcal{G}}\cvb{\opA}=\sum_j\op{G}_j\opA\op{G}_j^\dagger$ and the set of its corresponding Kraus operators $\mathcal{G}=\cvc{\op{G}_j,\op{G}_j^\dagger}_j,$ the set of all fixed point $\mathcal{B}\cv{\mathcal{H}}^{\Phi_{\mathcal{G}}}:=\cvc{\opA\in\mathcal{B}\cv{\mathcal{H}} : \Phi_{\mathcal{G}}\cvb{\opA}=\opA}$ is identical to the commutant $\mathcal{G}':=\cvc{\opA\in\mathcal{B}\cv{\mathcal{H}} : \cvb{\opA,\op{G}_j}=0 \text{~for all~}j}.$ It can be also said that the operator $\opA$ is invariant under $\Phi_\mathcal{G}$ if and only if it is invariant under $\op{G}_j\cdot\op{G}_j^\dagger$ for all $j.$
For our case we have then that for any $i=1,\ldots,d,$ an operator $\opA$ is invariant under operation $\mathcal{K}^\dagger$ if and only if $\cvb{\opH_i,\opA}=0$ and $\rho$ is a steady state of $\mathcal{K}$ if and only if $\cvb{\opH_i,\rho}=0.$

\section{Kolmogorov consistency and commutativity of the accessible algebra} \label{sec:KC_com}

Let us start by noting that when the accessible sub-algebra $\mathcal{A}_{acc}$ is commutative, the probability defined by Eq.~\eqref{eq:prob_in_Q}  satisfies KC, Eq.~\eqref{eq:KC}. In order to see this, it is helpful to rewrite Eq.~\eqref{eq:prob_in_Q} as 
\begin{align}
	\prob_n\cv{m_n,\ldots,m_1}&=\tra\cvb{\rho \hat{K}^\dagger_{m_1} \ldots \hat{K}^\dagger_{m_n} \hat{K}_{m_n} \ldots \hat{K}_{m_1} }  = \tra\cvb{\rho \hat{K}^\dagger_{m_1} \ldots  \hat{E}_{m_n} \ldots \hat{K}_{m_1} } \label{eq:P_K}
	\end{align}
Commutativity of $\mathcal{A}_{acc}$ implies that $[\hat{K}_m,\hat{E}_{m'}] \! = \! 0$ for every $m$ and $m'$. Using the above form for $\prob_{n}(m_n,\ldots,m_1)$ on the left-hand-side of Eq.~\eqref{eq:KC} we can then commute $\hat{E}_{m_n}$ through all the $\hat{K}^{\dagger}$ operators, repeat this procedure with $\hat{E}_{m_{n-1}} = \hat{K}^\dagger_{m_{n-1}}\hat{K}_{m_{n-1}}$, and continue until we arrive at the form of the joint probability that hold in the commutative case:
\begin{equation}
\prob_n\cv{m_n,\ldots,m_1} = \tra\cvb{\rho \hat{E}_{m_n} \ldots \hat{E}_{m_1}  } \,\, . \label{eq:P_E}
\end{equation}
Using then Eq.~\eqref{eq:sumE}, i.e. that $\cvc{\hat{E}_{m_j}}_{m_j}$ form POVMs for all $j,$ we see that this $\prob_n$ is equal to an appropriate $\prob_{n-1}$ when summed over any $m_j$, i.e.~Eq.~\eqref{eq:KC} holds for every $j$.

{
 A well-known example of a commutative (classical) system considered above is the situation when the system can be described by functions. To see this let us consider a qubit probe whose dynamics 
 are generated by a time-dependent Hamiltonian $\hat{H}\cv{t} = \xi(t)\hat{\sigma}_z,$ where $\xi(t)$ describes the influence of the system onto the qubit \footnote{The Hamiltonian in this example is time dependent, however, thanks to the commutative nature of the system, the analysis is the same as the one from Eq.~\eqref{eq:pure-dephase-U} with mutually commuting conditioned Hamiltonians.}. Here we have that 
    \begin{equation*}
    \hat{K}_{m_k} = \dfrac{1}{2}\big(e^{-i\alpha_k} + m_k e^{i\alpha_k}\big) = \left\{\begin{array}{lr}
        \cos\alpha_k, & m_k=1,  \\
        -i\sin\alpha_k, & m_k=-1,
    \end{array} \right. 
    \end{equation*}
where the qubit state is prepared in $\ket{+x}$ and measured in $\ket{m_k x}$ with $m_k=\pm1,$ $\alpha_k=\displaystyle\int_{t_k}^{t_{k+1}}\xi(t)\mathrm{d}t,$ and $t_{k+1}-t_k$ is the duration of the evolution that occurs between initialization at time $t_k$ and measurement at time $t_{k+1}.$ By commutativity of functions, it follows that
    \begin{equation*}
    \hat{E}_{m_k} = \left\{\begin{array}{lr}
        \cos^2\alpha_k, & m_k=1,  \\
        \sin^2\alpha_k, & m_k=-1,
    \end{array} \right. 
    \end{equation*}
and hence $\sum_{m_k}\hat{E}_{m_k}=1.$ With this at hand, via Eq.~\eqref{eq:P_E}, one can see that the KC can be fulfilled regardless of the time dependence of function $\xi\cv{t}.$ This fact holds when $\xi\cv{t}$ is a general classical stochastic process (with no assumptions made about its correlation time or Markovian vs non-Markovian nature), and calculation of expectation values involves an additional averaging over realizations of this process.}

{From the arguments above,} it follows that if the probability over measurement sequence fails to satisfy KC, the underlying sub-algebra cannot be commutative.
It is obviously  interesting to check under which conditions KC condition implies the commutativity of the accessible algebra.
To do so we consider a general case of arbitrary, either commutative or noncommutative, $\mathcal{A}_{acc}$ but with the probability $\prob_n\cv{m_n,\ldots,m_1}$ of measurement sequences given by Eq.~\eqref{eq:prob_in_Q} satisfying KC condition.

\subsection{Sufficient conditions for Kolmogorov consistency to imply classicality of the accessible algebra}\label{subsec:equi_random_unitary}
We observe that,  for arbitrary $\rho,$ Eq.~\eqref{eq:KC} with $j\! =\! 1$ implies
		\begin{equation}
			\mathcal{K}^\dagger\cvb{\op{Q}_{n-1}\cv{m_n,\ldots,m_2}} = \op{Q}_{n-1}\cv{m_n,\ldots,m_2} \label{eq:necessary_Kol}
		\end{equation}
for all $n.$ 
{This condition that is a necessary one for KC will be useful in the following analysis of sufficient conditions.} 
	\begin{lem}\label{props:single_axis}
		Given $\prob_n\cv{m_n,\ldots,m_1}$, a measurement probability associated with a complete projections protocol, with identical measurement basis for every time step, KC for $\prob_n,$ for arbitrary state $\rho,$ is equivalent to \[\cvb{\opH_i,\op{E}_m}=0,\]
		for all $i$ and $m.$
	\end{lem}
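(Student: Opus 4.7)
The plan is to establish the equivalence by splitting it into the two implications, each of which reduces to manipulations already introduced in the excerpt.

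For the direction ``KC $\Longrightarrow [\op{H}_i,\op{E}_m]=0$,'' I would use only the simplest instance of the necessary condition in Eq.~\eqref{eq:necessary_Kol}: taking $n\! =\! 2$ and $j\! =\! 1$ gives $\op{Q}_1(m_2)=\op{E}_{m_2}$, and summing Eq.~\eqref{eq:prob_in_Q} over $m_1$ yields $\tra[\rho\,\mathcal{K}^\dagger[\op{E}_{m_2}]] = \tra[\rho\,\op{E}_{m_2}]$ for every density matrix $\rho$. Since the density matrices span the Hermitian part of $\mathcal{A}$, this gives the operator identity $\mathcal{K}^\dagger[\op{E}_{m_2}]=\op{E}_{m_2}$. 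The nonselective map $\mathcal{K}^\dagger$ is the random-unitary channel of Eq.~\eqref{eq:non-select-K} whose Kraus operators are proportional to $\op{U}_i=e^{-it\op{H}_i}$, so by the commutant characterization of fixed points recalled in Sect.~\ref{subsec:RU} this is equivalent to $[\op{U}_i,\op{E}_{m_2}]=0$ for every $i$, hence to $[\op{H}_i,\op{E}_{m_2}]=0$ for every $i$ and $m_2$.

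For the reverse direction ``$[\op{H}_i,\op{E}_m]=0 \Longrightarrow$ KC,'' the key observation is that this hypothesis, though in general strictly weaker than commutativity of the whole accessible algebra, still suffices to commute every effect past every Kraus operator. Indeed each $\op{K}_m=\sum_i\overline{\gamma}(i)m(i)\op{U}_i$ is a linear combination of the $\op{U}_i$, and each $\op{U}_i$ commutes with every $\op{E}_{m'}$ as soon as $\op{H}_i$ does, so $[\op{K}_m,\op{E}_{m'}]=[\op{K}_m^\dagger,\op{E}_{m'}]=0$ for every pair $(m,m')$. Starting from Eq.~\eqref{eq:P_K}, I would then replay the reduction that produced Eq.~\eqref{eq:P_E}: pair $\op{K}^\dagger_{m_n}\op{K}_{m_n}=\op{E}_{m_n}$, commute this effect past all the remaining $\op{K}$'s and $\op{K}^\dagger$'s, induct on $n$, and arrive at the product form $\prob_n(m_n,\ldots,m_1)=\tra[\rho\,\op{E}_{m_1}\op{E}_{m_2}\cdots\op{E}_{m_n}]$. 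The effects also mutually commute (each $\op{E}_m$ is built from $\op{U}_{i'}^\dagger\op{U}_i$, which commutes with every $\op{U}_j$ and hence with every $\op{E}_{m'}$), so the ordering inside the trace is immaterial, and summing over any $m_j$ collapses the sequence via Eq.~\eqref{eq:sumE} to the corresponding $(n-1)$-point probability; this establishes KC at every slot $j$ and every $n$.

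I expect the main delicate point to lie in the sufficiency direction, specifically in making explicit the bridge $[\op{H}_i,\op{E}_m]=0 \Rightarrow [\op{K}_m,\op{E}_{m'}]=0$. This is where the linearity of $\op{K}_m$ in the $\op{U}_i$ matters, and it is what allows a hypothesis strictly weaker than full commutativity of $\mathcal{A}_{acc}$ to still imply KC. The assumption of an identical measurement basis at every time step enters precisely here, ensuring that the single family $\{\op{E}_m\}_m$ appears in every slot, so that one common set of commutation relations $[\op{H}_i,\op{E}_m]=0$ is sufficient; with a step-dependent basis one would instead have to impose $[\op{H}_i,\op{E}^{(k)}_m]=0$ at every step $k$. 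Once these ingredients are recorded, the remaining algebraic reductions are mechanical.
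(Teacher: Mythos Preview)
Your argument is correct and matches the paper's proof: both use the $n\!=\!2$, $j\!=\!1$ instance of Eq.~\eqref{eq:necessary_Kol} together with the fixed-point/commutant identification from Sect.~\ref{subsec:RU} for the forward implication, and the passage $[\op{H}_i,\op{E}_m]=0\Rightarrow[\op{K}_{m'},\op{E}_m]=0\Rightarrow$ Eq.~\eqref{eq:P_E} for the reverse one (the paper additionally supplies a self-contained Cauchy--Schwarz argument for the fixed-point step, which you simply cite). One minor caveat: your claim that each individual $\op{U}_{i'}^\dagger\op{U}_i$ commutes with every $\op{U}_j$ does not follow from the hypothesis---only the full combinations $\op{E}_m$ do---but since the product form Eq.~\eqref{eq:P_E} already yields KC at every slot via Eq.~\eqref{eq:sumE} without any reordering, the mutual commutativity of the effects is unnecessary and your proof stands.
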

	\begin{proof}
		Assume that $\cvb{\opH_i,\op{E}_m}=0,$ for all $i$ and $m,$ which implies $\cvb{\op{K}_{m'},\op{E}_{m}}=0,$ for all $m$ and $m',$ KC follows then according to reasoning giving Eqs.~\eqref{eq:P_K} and ~\eqref{eq:P_E}.
For the converse, it is enough to use Eq.~\eqref{eq:necessary_Kol}  for $n=2$, which reads
    \begin{equation}
        \sum_{m'}\op{K}^\dagger_{m'}\op{E}_{m}\op{K}_{m'} = \op{E}_{m} \label{eq:necessary_Kol_reduced},
    \end{equation}
for all $m.$ From the discussion in Sect.~\ref{subsec:RU} we know that it is equivalent to $\cvb{\op{K}_{m'},\op{E}_{m}}=0$ for all $m$ and $m'.$ 
For the consistency of the article, let us consider the following. Define an inner product $\cvr{\op{A},\op{B}}=\sqrt{\tr\cv{\op{A}^\dagger\op{B}}}$ 
and set $\cvV{A}_2^2:=\cvr{\op{A},\op{A}}.$ By triangle and Cauchy-Schwarz inequalities, if $\mathcal{K}\cvb{\op{E}_{m}}=\op{E}_{m}$ we have
\begin{align*}
	\cvV{\op{E}_{m}}_2^2 &= \Big\vert\cvr{\op{E}_{m},\mathcal{K}\cvb{\op{E}_{m}}}\Big\vert\leqslant \sum_{i=1}^d \cvv{\gamma\cv{i}}^2\Big\vert\cvr{\op{E}_{m},\mathcal{U}_i\cvb{\op{E}_{m}}}\Big\vert\leqslant \cvV{\op{E}_{m}}_2^2.
\end{align*}
where we recall Eq.~(\ref{eq:non-select-Kd}). Since the equality holds, then $\op{E}_{m}$ and $\mathcal{U}_i\cvb{\op{E}_{m}}$ are linearly dependent, e.g. $\mathcal{U}_i\cvb{\op{E}_{m}}=\lambda\op{E}_{m}$ for some positive number $\lambda.$ By trace preserving property we have $\lambda=1$ and then
$\sum_i \tilde{U}_i^\dagger \op{E}_{m} \tilde{U}_i = \op{E}_m$, which is equivalent to $[\op{E}_{m},\tilde{U}_i]\! =\! 0$ for all $m$ and $i$, where $\tilde{U}_i \! =\! \gamma\cv{i} \op{U}_i$ form a set of Kraus operators. From this, $[\op{H}_i,\op{E}_{m}]\! = \! 0$ for all $i$ and $m$ follows immediately.
	\end{proof}

Let us note that from the structure of the above proof it follows that fulfilling KC for $n=2$ and $j=1$ {\it and for any $\rho$} implies that all the consistency conditions (for any $n$ and $j$ are fulfilled). This strong result follows from our focus on statements that hold for an arbitrary state $\rho$ of the system.

Let us now focus on the probability of obtaining measurement sequences in Lemma \ref{props:single_axis} and demonstrate the conditions for equivalence. \begin{proposition}\label{props:nondegenerate}
	For $\opE_m$ that are nondegenerate for all $m,$ the probability $\prob_n\cv{m_n,\ldots,m_1}$ given by complete projections protocol, with identical measurement basis for every time step, satisfies KC for arbitrary state $\rho$ if, and only if, $\mathcal{A}_{acc}$ is commutative.
\end{proposition}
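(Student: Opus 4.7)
The plan is to use Lemma \ref{props:single_axis} as a black box and reduce the claim to a purely linear-algebraic statement about commutants of nondegenerate Hermitian operators. Lemma \ref{props:single_axis} already establishes that KC for arbitrary $\rho$ is equivalent to $\cvb{\opH_i,\op{E}_m}=0$ for all $i$ and $m$; hence it suffices to show that, under the nondegeneracy assumption on the effects, these commutation relations are equivalent to $\cvb{\opH_i,\opH_j}=0$ for all $i,j,$ i.e. to commutativity of $\mathcal{A}_{acc}.$

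The forward direction is essentially automatic. If $\mathcal{A}_{acc}$ is commutative, then all $\opH_i$ commute pairwise, so the same holds for all $\opU_i=e^{-it\opH_i}$ and their adjoints. By Eq.~\eqref{eq:Effect_gen}, each $\op{E}_m$ is a linear combination of products $\op{U}_{i'}^\dagger\op{U}_i$ and therefore sits inside $\mathcal{A}_{acc}$ itself. Commutativity of $\mathcal{A}_{acc}$ then gives $\cvb{\opH_i,\op{E}_m}=0,$ which by Lemma \ref{props:single_axis} implies KC.

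For the converse, assume $\cvb{\opH_i,\op{E}_m}=0$ for all $i,m.$ Fix any $m;$ by hypothesis $\op{E}_m$ is a nondegenerate Hermitian operator, and so admits a spectral decomposition $\op{E}_m=\sum_k e_k \ketbra{e_k}$ with pairwise distinct eigenvalues $\cvc{e_k}$ and an orthonormal eigenbasis $\cvc{\ket{e_k}}$ of $\mathcal{H}_S.$ The commutant of such a nondegenerate operator inside $\mathcal{B}\cv{\mathcal{H}_S}$ coincides with the maximal abelian subalgebra of operators that are diagonal in the basis $\cvc{\ket{e_k}}$ (equivalently, polynomials in $\op{E}_m$). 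Since each $\opH_i$ belongs to this commutant, every $\opH_i$ is diagonal in $\cvc{\ket{e_k}},$ and therefore $\cvb{\opH_i,\opH_j}=0$ for all $i,j,$ which means $\mathcal{A}_{acc}$ is commutative.

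The main obstacle is not computational but conceptual: one needs to recognize that the nondegeneracy of even a single $\op{E}_m$ promotes the two-body relation $\cvb{\opH_i,\op{E}_m}=0$ into the much stronger statement that the entire generating family $\cvc{\opH_i}_{i=1}^d$ shares a common eigenbasis with $\op{E}_m.$ This is also what makes the nondegeneracy hypothesis essential; if $\op{E}_m$ were degenerate, its commutant would be strictly larger than an abelian subalgebra and the $\opH_i$ could commute with $\op{E}_m$ without commuting with each other, opening the door to the counterexamples that motivate the refined qubit-based treatment pursued in the next section.
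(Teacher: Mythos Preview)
Your proof is correct and follows essentially the same route as the paper's: both invoke Lemma~\ref{props:single_axis} to reduce KC to $\cvb{\opH_i,\op{E}_m}=0$, and both then exploit nondegeneracy of $\op{E}_m$ to conclude that all $\opH_i$ are simultaneously diagonal in the eigenbasis of $\op{E}_m$, hence mutually commuting. The only cosmetic difference is that the paper writes out the matrix-element computation $\opP_{\ell}\cvb{\opH_i,\opH_j}\opP_{\ell'}=0$ explicitly, whereas you phrase the same fact as ``the commutant of a nondegenerate Hermitian operator is the maximal abelian algebra diagonal in its eigenbasis.''
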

\begin{proof}
	As discussed, it suffices to show that KC implies commutative structure.
First we remark that, in general, by Jacobi identity the conditions $\cvb{\opE_m,\opH_j}=\cvb{\opE_m,\opH_i}=0$ lead to $\cvb{\opE_m,\cvb{\opH_i,\opH_j}}=0.$ Now we will show that $\cvb{\opH_i,\opH_j}$ is a zero matrix. Let $\opE_m=\displaystyle\sum_{\ell=1}^de_{\ell}(m)\opP_{\ell}(m)$ be a spectral decomposition of the observable $\opE_m$ for any $m.$ Since the Hamiltonian $\opH_i$ is Hermitian, then it will share common eigenvectors with $\opE_m,$ and so does $\opH_j.$ From the assumption that $\opE_m$ is nondegenerate, i.e. $e_{\ell}\cv{m}$-eigensubspace is one dimensional, then we have 
    \begin{equation}
        \opP_{\ell}\cvb{\opH_i,\opH_j}\opP_{\ell'}=\cvb{h_{\ell}\cv{i}h_{\ell}\cv{j}-h_{\ell}\cv{j}h_{\ell}\cv{i}}\delta_{\ell\ell'}\opP_{\ell}=0, \label{eq:invariant_condition_nondegenerate}
    \end{equation}
where we write $h_{\ell}\cv{i}=\tra\cv{\opP_{\ell}\opH_i},$ $h_{\ell}\cv{j}=\tra\cv{\opP_{\ell}\opH_j}$ and $\delta_{\ell\ell'}$ is a Kronecker symbol. This is not the case when the observable $\opE_m$ is degenerate since there could be at least one eigensubspace with dimension higher than one. In such scenario, the operator $\opE_m$ will behave as an identity in that subspace, while there is no strong enough constrain on the matrix elements of $\cvb{\opH_i,\opH_j}$ on this subspace, i.e. they can be nonzero (in degenerate subspace) but obey $\cvb{\opE_m,\cvb{\opH_i,\opH_j}}=0.$ Hence the degeneracy of $\opE_m$ is a condition for vanishing of $\cvb{\opH_i,\opH_j}.$
\end{proof}

From the proof above, when at least one of observables $\opE_m$ is degenerate, Eq.\eqref{eq:invariant_condition_nondegenerate} will no longer guarantee the simultaneous diagonalizability of conditioned Hamiltonians $\opH_i$ and $\opH_j.$ 
The nondegeneracy of the induced observables is sufficient to obtain the equivalence between KC and commutativity of $\mathcal{A}_{acc};$ however, it is not generally necessary---an example will be given in the following Section where we will consider the case of qubit being a probe.
We mention that the role of nondegeneracy also appears in the similar study \cite{Strasberg2019} on the relation between KC (as a definition of classicality) and incoherent dynamics of open quantum system (Eq.~\eqref{eq:necessary_Kol} in our case.)

We remark here that the nondegeneracy of $\opE_m$ does not require Hamiltonians $\opH_i$ to be nondegenerate. However, some particular configuration of energy spacings may lead to $\opE_m$  being degenerate. For example, by the definition of $\opE_m$ we have its eigenvalues
    \begin{equation}
        e_{\ell}(m) = \sum_{i,i'}\left \cvb{\overline{\gamma}(i)m(i)\overline{m}(i')\gamma(i')\right }e^{-it\cv{h_{\ell}\cv{i}-h_{\ell}\cv{i'}}} \label{eq:eigenvalues-E_m} \,\, ,
    \end{equation}
and one can see that for arbitrary meter states $\ket{m},$ there can be $\ell$ and $\ell'$ such that $e_{\ell}\cv{m}-e_{\ell'}\cv{m}=0$ if  \[h_{\ell}\cv{i}-h_{\ell'}\cv{i}=h_{\ell}\cv{i'}-h_{\ell'}\cv{i'}\] for all $i$ and $i',$ or the energy spacing between $\ell-$ and $\ell'-$levels is identical for all the conditional Hamiltonians. 

\subsection{Equivalence for the qubit case}\label{subsec:qubit_case}
In case of the probe being a qubit the accessible sub-algebra $\mathcal{A}_{acc}$ is generated by $\cvc{\iden,\opH_\uparrow,\opH_\downarrow},$ and we have
\[\op{E}^X_\pm=\dfrac{\iden}{2}\pm\dfrac{1}{4}\cv{\op{U}_\uparrow^\dagger\op{U}_\downarrow + \op{U}_\downarrow^\dagger\op{U}_\uparrow},~~\op{E}^Y_\pm=\dfrac{\iden}{2}\pm\dfrac{i}{4}\cv{\op{U}_\uparrow^\dagger\op{U}_\downarrow - \op{U}_\downarrow^\dagger\op{U}_\uparrow}\]
for $X-$ and $Y-$ measurements. From Lemma~\ref{props:single_axis}, KC for $\prob_n^{X}$ or $\prob_n^{Y}$ is equivalent $\cvb{\op{H}_{\uparrow(\downarrow)},\op{E}^X_\pm}=0$ or $\cvb{\op{H}_{\uparrow(\downarrow)},\op{E}^Y_\pm}=0,$ respectively. An interesting result is obtained when we assume that both of these conditions are fulfilled simultaneously.

\begin{proposition}\label{props:qubit_case}
    Let $\prob_n^\alpha\cv{m^\alpha_n,\ldots,m^\alpha_1}$ be a probability of measurement sequences for the observables $\opS_\alpha$ for $\alpha=X,Y$ of the qubit undergoing pure dephasing, with re-preparations in state $\ket{+x},$ i.e. $\ket{\gamma}=\ket{+x}$ for all $n.$ Both processes $\prob_n^X$ and $\prob_n^Y$ satisfy KC for arbitrary state $\rho$ if, and only if, $\mathcal{A}_{acc}$ is commutative.

\end{proposition}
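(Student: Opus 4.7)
The reverse implication is immediate from the derivation of Eq.~\eqref{eq:P_E} at the start of Section~\ref{sec:KC_com}, so the work lies in the forward direction. My strategy is to feed the two KC hypotheses (for $X$ and $Y$ sequences) into Lemma~\ref{props:single_axis} and combine the resulting commutation relations until nothing but $[\opH_\uparrow,\opH_\downarrow]$ remains. Throughout I use that $\opH_\uparrow$ and $\opH_\downarrow$ automatically commute with $\opU_\uparrow=e^{-it\opH_\uparrow}$ and $\opU_\downarrow=e^{-it\opH_\downarrow}$ respectively, which makes the algebra clean.

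First I apply Lemma~\ref{props:single_axis} twice. KC of $\prob^X_n$ for all $\rho$ gives $[\opH_{\uparrow(\downarrow)},\opE^X_\pm]=0$, and KC of $\prob^Y_n$ gives $[\opH_{\uparrow(\downarrow)},\opE^Y_\pm]=0$. Reading off the explicit expressions for $\opE^X_\pm$ and $\opE^Y_\pm$ displayed just before the proposition, the $X$ condition is equivalent to commutation with $\opU_\uparrow^\dagger\opU_\downarrow+\opU_\downarrow^\dagger\opU_\uparrow$, and the $Y$ condition to commutation with $\opU_\uparrow^\dagger\opU_\downarrow-\opU_\downarrow^\dagger\opU_\uparrow$. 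Adding and subtracting these two relations I conclude that $\opH_\uparrow$ and $\opH_\downarrow$ each commute with $\opU_\uparrow^\dagger\opU_\downarrow$ and with $\opU_\downarrow^\dagger\opU_\uparrow$ separately. Because $\opH_\uparrow$ commutes with $\opU_\uparrow^\dagger$, I can factor $\opU_\uparrow^\dagger$ out of $[\opH_\uparrow,\opU_\uparrow^\dagger\opU_\downarrow]=0$ to obtain $[\opH_\uparrow,\opU_\downarrow]=0$, and the symmetric manipulation yields $[\opH_\downarrow,\opU_\uparrow]=0$.

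The remaining step — promoting $[\opH_\uparrow,e^{-it\opH_\downarrow}]=0$ to $[\opH_\uparrow,\opH_\downarrow]=0$ — is the main obstacle and the only place where a subtlety enters. For generic evolution time $t$, meaning $t(\lambda-\mu)\notin 2\pi\mathbb{Z}$ for any pair of distinct eigenvalues $\lambda,\mu$ of $\opH_\downarrow$, the spectral projectors of $\opU_\downarrow$ coincide with those of $\opH_\downarrow$, so commuting with $\opU_\downarrow$ is the same as commuting with $\opH_\downarrow$, and we are done; the symmetric argument handles the other pairing. One should flag (or exclude) the measure-zero set of resonant times where distinct eigenvalues of $\opH_\downarrow$ collapse onto the same eigenvalue of $\opU_\downarrow$, since in that degenerate case the matrix elements of $[\opH_\uparrow,\opH_\downarrow]$ between such eigenspaces are not constrained by the hypothesis. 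Modulo this caveat, commutativity of $\mathcal{A}_{acc}=\mathcal{A}_{\{\iden,\opH_\uparrow,\opH_\downarrow\}}$ follows since the sub-algebra is generated by two mutually commuting Hermitian operators.
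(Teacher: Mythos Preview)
Your argument is correct and follows essentially the same route as the paper: apply Lemma~\ref{props:single_axis} to each of the $X$ and $Y$ hypotheses, form linear combinations (the paper packages this as $[\opH_{\uparrow(\downarrow)},\opE^X_m\pm i\opE^Y_m]=0$, which is exactly your add/subtract step), strip off the unitary that already commutes with its own generator, and read off $[\opH_\uparrow,\opU_\downarrow]=[\opH_\downarrow,\opU_\uparrow]=0$.

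Your final paragraph is in fact \emph{more} careful than the paper, which simply writes ``Hence $[\opH_\uparrow,\opH_\downarrow]=0$'' without comment. The caveat you raise about resonant evolution times---where distinct eigenvalues of $\opH_\downarrow$ map to the same eigenvalue of $\opU_\downarrow=e^{-it\opH_\downarrow}$---is a genuine one that the paper glosses over; for a fixed $t$ in this measure-zero set, commutation with $\opU_\downarrow$ does not pin down commutation with $\opH_\downarrow$. So your proof matches the paper's and, with the stated genericity assumption on $t$, closes a small gap the paper leaves implicit.
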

\begin{proof}
If both the probabilities over sequences of repeating $X$ and $Y$ measurements $\prob_n^X$ and $\prob_n^Y$ satisfy KC condition, then from Lemma \ref{props:single_axis} we have 
	\begin{equation}
		\cvb{\opH_{\uparrow(\downarrow)},\op{E}^X_m\pm i\op{E}^Y_m}=0 \label{eq:sum_EXY}
	\end{equation}
or equivalently $\cvb{\opH_\downarrow,\op{U}_\uparrow}=\cvb{\opH_\downarrow,\op{U}^\dagger_\uparrow}=\cvb{\opH_\uparrow,\op{U}_\downarrow}=\cvb{\opH_\uparrow,\op{U}^\dagger_\downarrow}=0.$ Hence $\cvb{\opH_\uparrow,\opH_\downarrow}=0.$ 
\end{proof}

From this observation, in principle, one can characterize the quantumness/classicality of the accessible sub-algebra $\mathcal{A}_{acc}$ generated by $\cvc{\iden,\opH_\uparrow,\opH_\downarrow}$ from the measurement statistics of observables $\opS_x$ and $\opS_y$ for all possible $\rho.$ 
We stress that the proposition above does not require the nondegeneracy of the conditional Hamiltonian $\opH_{\uparrow(\downarrow)},$ but instead, KC needs to be verified for two types of measurements. 

It is interesting to extend this derivation into higher dimensional probe system. 
For instance, when the dimension $d$ is a power of prime numbers,  the operator $\opS_x$ can be generalized to a shift operator $\op{g}=\sum_m \ketbra{m+1 \mod d}{m}.$ One interesting possibility is to consider the measurements axes from the eigenstates of $\op{g},\op{h},\op{g}\op{h},\op{g}\op{h}^2,\ldots,\op{g}\op{h}^{d-1}$ where $\op{h}=diag\cv{1,\omega^1,\ldots,\omega^{d-1}}$ with $\omega=e^{2i\pi/d}.$
 These are  so-called mutually unbiased bases of measurements \cite{Bandyopadhyay2002,Bengtsson2007}. 
It is interesting whether there exists a combination similar to Eq.~\eqref{eq:sum_EXY} in the qudit case for such choice of measurement bases. If so, one can achieve a clear classification of the quantumness of a given quantum system by multi-dimensional probe (including a multi-qubit probe) without the assumption on nondegeneracy. However we leave this issue as an open question for future research.
\section{Witnessing noncommutativity of the accessible algebra of the probed system}\label{sec:QW}
A quantumness witness is a (operator-valued) measure on quantum system $\cv{\mathcal{A},\rho}$ characterizing whether the system is quantum or classical.
It was proposed by Alicki et al. in Refs.~\cite{Alicki2008b,Alicki2008a} in analogy to entanglement witness, and it was extensively studied by Facchi et al. in Refs.~\cite{Facchi2012,Facchi2013}. In general language, in the sense of definition of classicality used here and in Ref.~\cite{Alicki2008b}, a quantumness witness is an observable $\op{Q}\in\mathcal{A},$ for a given state $\rho,$ that has distinct values between commutative and noncommutative $\mathcal{A}$ i.e. $\tr\cv{\rho~\!\op{Q}}\in\mathbb{R}_{cl}$ for commutative $\mathcal{A}$ and $\tr\cv{\rho~\!\op{Q}}\in\mathbb{R}_{qu}$ for noncommutative $\mathcal{A}$ where $\mathbb{R}_{cl},\mathbb{R}_{qu}$ are some distinct subsets of the real line $\mathbb{R}_{cl},\mathbb{R}_{qu}\subset\mathbb{R}$ with $\mathbb{R}_{cl}\cap\mathbb{R}_{qu}=\emptyset.$
Ones of the examples are anti-commutations of arbitrary pair of positive operators, $\op{Q}=\cvc{\op{X},\op{Y}}$ for arbitrary $X>0$ and $\op{Y}>0,$ that all of them will be positive if and only if the underlying algebra $\mathcal{A}$ is commutative \cite[Theorem 1]{Alicki2008b}. Another example is associative test operator $\op{Q}=\opA\circ\cv{\op{B}\circ\op{C}}-\cv{\opA\circ\op{B}}\circ\op{C},$ where $\opA\circ\op{B}=\cvc{\opA,\op{B}}/2,$ that is vanishing with respect to $\rho$ for all triplets $\cv{\opA,\op{B},\op{C}}$ if and only if  $\mathcal{A}$ is commutative \cite{Facchi2013}. Note that both categories of quantumness witnesses cannot confirm classicality of the system unless the test is done for all pairs of positive operators in the first case and all triplets for the second category. 

From the results of the previous section it is clear that breaking of any of KC conditions, i.e.~a nonzero value of the below expression for any $n$ and $j$
\begin{eqnarray}
   \delta \prob_{n,j}& \equiv  \sum_{m_j} \prob_n\cv{m_n,\ldots,m_{j+1},m_j,m_{j-1},\ldots,m_1} - \nonumber\\  & \prob_{n-1}\cv{m_n,\ldots,m_{j+1},m_{j-1},\ldots,m_1} \,\, \label{KCwitness}
\end{eqnarray}
is a witness of noncommutativity of the algebra $\mathcal{A}_{acc}$ accessible by a probe (under additional assumption that  the observables $\opE_m$ are nondegenerate for all $m$ when the probe's Hilbert space has dimension larger than $2$). 
{The nonvanishing of quantity Eq.~\eqref{KCwitness} as a signature of the quantumness is employed in several works, e.g. the Fine’s theorem for Leggett-Garg tests \cite{Halliwell2019}, the witness of quantum coherence \cite{Li2012,Schild2015}, or recently as an experimental measure for the degree of quantumness \cite{Smirne2019}.} 
One can verify such equalities by comparing two types of set-ups: one with $n$ measurements and one with a subsequence of $n-1$ measurements with $j^{th}$ one from the first setup omitted. 

Of course, proving that $\mathcal{A}_{acc}$ is commutative from the analysis of KC of measurement on the probe is hard: it requires checking the consistency conditions for all possible states $\rho$ of the system. Let us however remind that as the proof of Lemma 1 implies that such checking has to be done only for the simplest form of KC condition that involves only $\prob_{2}\cv{m_2,m_1}$ and $\prob_{1}\cv{m_1}$.

Fortunately, the falsification of commutativity of the accessible algebra  of the system is more feasible in practice. Nonzero value of $\delta \prob_n$ for any $n$ and $\rho$ is a witness of noncommutativity, and one can hope that such a feature of $\mathcal{A}_{acc}$ becomes apparent for most states $\rho$ and most values of $n$. It is however easy to see that consistency conditions corresponding to various $n$ will differ in their sensitivity to noncommutativity of $\mathcal{A}_{acc}$. Let us give one example here: when $[\rho,\hat{K}_m] = 0,$ or equivalently $[\rho,H_i]=0,$
we have $\prob_2\cv{m_2,m_1}=\tr\cv{\rho\op{K}^\dagger_{m_1}\op{E}_{m_2}\op{K}_{m_1}}=\tr\cv{\rho\!~\op{F}_{m_1}\op{E}_{m_2}}$ where $\sum_{m}\op{F}_{m}=\iden$ because the measurement map considered here is unital. 
This means that KC for $n=2$ is always fulfilled for such $\rho$ (with a completely mixed state being an example), independent of properties of $\mathcal{A}_{acc}$. On the other hand, the same reasoning applied to $\prob_3$ shows that while KC for $n=3$ and $j=1$ is fulfilled for both commutative and noncommutative $\mathcal{A}_{acc}$, KC with $j=2$ is not automatically true, 
and its falsification will signify the noncommutativity of $\mathcal{A}_{acc}$.

\section{Discussion} \label{sec:discussion}
In this section we will discuss relations between the commutativity of the system's algebra accessible by the probe, and other possible definitions of classicality of the system as witnessed by probe (or probes) coupled to it.

\subsection{Relation to generation of probe-system entanglement during the evolution}
When a probe $P,$ initialized in a superposition of its pointer states $\ket{i}$ from Eq.~\eqref{eq:pure-dephase-U}, interacts with the system $S$ for a finite time, it experiences dephasing: its reduced density matrix $\rho_P$ becomes mixed, with its off-diagonal elements being suppressed relative to their initial values. When the initial state of the system $\rho$ is pure, this decoherence is necessarily accompanied by creation of $P$-$S$ entanglement \cite{Kuebler_AP73,Zurek_RMP03,Schlosshauer_book}. However, for a mixed state $\rho,$ dephasing can occur without such entanglement \cite{Eisert_PRL02,Pernice_PRA11,Roszak_PRA15,Roszak_qudit_PRA18,Roszak_PRA18}. Necessary and sufficient conditions on $\hat{U}_i$ and $\rho$ leading to zero $P$-$S$ entanglement during dephasing of $P$  are given in Refs.~\cite{Roszak_PRA15,Roszak_qudit_PRA18,Roszak_PRA18}. For $P$  being a qubit, there is no $P$-$S$ entanglement at time $t$ if, and only if, $\hat{U}_0(t) \rho \hat{U}_0^\dagger (t) = \hat{U}_1(t) \rho \hat{U}^\dagger_1(t)$, where $0$ and $1$ denote the two states of the qubit. 
For a higher-dimensional probe, apart from $\hat{U}_i \rho \hat{U}_i^\dagger = \hat{U}_j \rho \hat{U}^\dagger_j$ for any $i$ and $j$, we also need $[\hat{U}_i,\hat{U}_j]=0$ for any $i$ and $j$, i.e.~the accessible algebra has to be commutative. 

We arrive now at an interesting observation. For a qubit probe, commutativity of $\mathcal{A}_{acc}$ (and Kolmogorov consistency of measurements on $P$ ) is unrelated to $P$-$S$ entanglement being created (or not) during the probe's dephasing. The algebra can be commutative, but  when $[\rho,\hat{H}_0]\neq [\rho,\hat{H}_1]$, there will be $P$-$S$ entanglement. On the other hand, when $\rho$ is completely mixed, there is no such entanglement, no matter what the properties of the algebra are.
In contrast, for a higher-dimensional probe, commutativity of $\mathcal{A}_{acc}$ is necessary for separability of the joint state of $P$  and $S$ during pure dephasing of the probe. Breaking of KC for such a probe implies thus that $P$-$S$ entanglement is expected to be nonzero during the probe's dephasing. Let us note that the correlation function  $\Delta^\alpha_{2,1}$ was used to detect the entangling character of qubit-system evolution in Ref.~ \cite{Rzepkowski2020}. 
The discussion above shows that checking of KC of measurements on a higher-dimensional probe could be a more sensitive witness of entangling character of probe's dephasing. 
 
\subsection{The system being a source of classical noise acting on the probe}
Dephasing of quantum probes interacting with a quantum system (an environment) can often be modelled by replacing $S$ by a source of classical noise, the statistical properties of which are derived from $\rho$, $\hat{H}_S$, and $\hat{V}_S$ \cite{Szankowski_JPCM17}. For such a replacement to be robust, for example to be valid even when the probe is subjected to an external control (such as dynamical decoupling \cite{Viola_PRA98,Szankowski_JPCM17,Degen_RMP17} by application of short pulses rotating the qubit state between periods of its evolution due to coupling to $S$), certain criteria have to be fulfilled. For $S$ being of finite dimension considered here, it has been proven \cite{Szankowski_SR20} that $[\hat{H}_S,\hat{V}_S]\! =\!0,$ which is equivalent to commutativity of $\mathcal{A}_{acc}$, guarantees the existence of such noise representation. In fact, the commutativity condition corresponds to $S$ being replaced by a static random field when considering the dephasing of $P$. This is obvious from the fact that after letting the $P$ interact with $S$ for time $t$, the off-diagonal elements of the reduced density matrix of $P$, $\rho_P$, become
\begin{equation}
    \frac{\bra{i} \rho_{P}(t) \ket{j}}{ \bra{i} \rho_{P}(0) \ket{j}} =  e^{-i(\epsilon_i - \epsilon_j)t} \sum_n \bra{n}\rho_S(0)\ket{n} e^{-i(v_i-v_j)\nu_n t} \,\, ,
\end{equation}
where $\ket{n}$ are common eigenstates of $\hat{H}_S$ and $\hat{V}_S$ and $\hat{V}_S\vert n\rangle = \nu_n\vert n\rangle.$ The dephasing is thus of random unitary form, with phase shift by $(v_i-v_j)\nu_n t$ applied to $i$-$j$ coherence with probability $p_n \! \equiv \! \langle n\vert \rho_S(0)\vert n\rangle.$

\subsection{Leggett-Garg inequality}\label{subsec:LG}
Here we discuss the connection to one of standard methods of verification of quantumness in sequential measurement setups, namely the Leggett-Garg inequality \cite{Leggett1985,Emary2014}. 
{As previously discussed, the use of KC to witness the quantumness is known in literature \cite{Smirne2019,Halliwell2019,Li2012,Schild2015}. Here we demonstrate the similar concept for our probe-system mechanism with the example given in Ref.~\cite{Bechtold2016}.} 
Prior to doing so let us introduce another type of witness derived from KC. 
One can also consider as witnesses various combinations of correlation functions of multiple measurements on the probe. One example is 
    \begin{align}
		\Delta_{n,j}&=\sum_{m_k}\Big(\prod_{k\neq j}m_k\Big) \delta \prob_{n,j}
	\end{align}
	which is a difference of two correlation functions: one obtained by performing $n$ measurements, but discarding the result of $j$-th one, and evaluating the correlation of all the other results, and the other obtained by doing $n-1$ measurements at the same times (with $j$-th one not performed) and evaluating the correlation of all the results. Obviously, $\Delta_{n,j} \neq 0$ for some $n$ and $j \! \leq \! n$ implies breaking of KC and thus the noncommutativity of $\mathcal{A}_{acc}$.

For the qubit case with induced measurements, we can have two possible types of quantities $\Delta^X_{n,j}$ and $\Delta^Y_{n,j}$ from the different axes of measurements. From Proposition 1 we know that either $\Delta^X_{n,j}$ or $\Delta^Y_{n,j}$ is nonvanishing for some $n$ and $j\leq n$ if and only if $\mathcal{A}_{acc}$ is noncommutative.

For witnessing the noncommutativity, given that $\alpha=X,Y$, with two measurements  we can construct
	\begin{equation}
		\Delta_{2,1}^\alpha = \cvr{\sigma_\alpha(t_2)}_{t_1}-\cvr{\sigma_\alpha(t_2-t_1)} \label{eq:Delta_2}
	\end{equation}
where $\cvr{\sigma_\alpha(t_2)}_{t_1}=\sum_{m_2,m_1}m_2\prob^\alpha_2\cv{m_2,m_1}$ is an average of observable $\opS_\alpha$  at time $t_2$ after preparation, given that a nonselective measurement (followed by re-preparation) is done at time $t_1$ after the initial preparation, while $\cvr{\sigma_\alpha(t_2-t_1)}=\sum_{m_2}m_2\prob^\alpha_1\cv{m_2}$ is an average of observable $\opS_x$ with respect to the same initial state at time $t_2-t_1$ after preparation. However, since this witness relies on $\delta \prob_2 \neq 0$, it will not detect noncommutativity if $[\rho,\hat{H}_{\uparrow(\downarrow)}]=0.$
This can be avoided by considering a sequence of three measurements, and evaluating the quantities of the form		
    \begin{equation}
		\begin{split}
			\Delta_{3,2}^\alpha &= \cvr{\sigma_\alpha(t_3)\sigma_\alpha(t_1)}_{t_2}-\cvr{\sigma_\alpha(t_3-t_2)\sigma_\alpha(t_1)}\label{eq:Delta_3}\\
			&=\sum_{m_3,m_1}m_3m_1 \delta \prob^{\alpha}_{3,2}
		\end{split}
	\end{equation}
where \[\cvr{\sigma_\alpha(t_3)\sigma_\alpha(t_1)}_{t_2}=\sum_{m_3,m_2,m_1}m_3m_1\prob^\alpha_3\cv{m_3,m_2,m_1}\] is a correlation of measurements of observable $\opS_\alpha$ between first and third steps given a nonselective measurement (followed by re-preparation) at time $t_2;$ and  \[\cvr{\sigma_\alpha(t_3-t_2)\sigma_\alpha(t_1)}=\sum_{m_3,m_1}m_3m_1\prob^\alpha_2\cv{m_3,m_1}\] is a correlation of measurements with respect to the same initial state between time $t_1$ and $t_3-t_2.$ 
We demonstrate in the following that the Leggett-Garg inequality can be derived from the this type of witness.

Kolmogorov consistency of probabilities of obtaining various results of sequential projective measurements on the probe, can also be interpreted as a property of probability of sequential weak measurements on the system coupled to the probe, if one considers the induced map $\mathcal{K}_m$ a weak measurement on the system. 
From this perspective, since KC is an important ingredient in the derivation of the Leggett-Garg-type (LG-type) inequalities \cite{Leggett1985,Milz2019}, we can expect the connection between the noncommutativity witnesses $\Delta^\alpha_{2,1}$ or $\Delta^\alpha_{3,2}$ and the LG-type inequalities. 
This is the case of the experiment in Ref.~\cite{Bechtold2016} where the quantifier $\Delta^X_{2,1}$ can be computed from the measurement values assigned to be $m=0,1$ for the outcomes $\ket{-x}$ and $\ket{+x},$ denoted by $+$ and $-,$ respectively, and assuming post-selection without re-preparation at the second step. In particular,
	\begin{equation}
		\Delta_{2,1}^X=\prob_2\cv{+,+}-\prob_1\cv{+}+\prob_2\cv{-,-}.
	\end{equation}
Let us note that using notation from Ref.~\cite{Bechtold2016}, $\prob_2\cv{+,+}=g_3\cv{t_2,t_1}$ and $\prob_1\cv{+}=g_2\cv{t_2+t_1}$ where $t_2$ is a time duration between the second and first measurements, $t_1$ is the evolution time before the first measurement for the construction of $g_3;$ and $t_2+t_1$ is the evolution time before the first measurement in $g_2.$ 
In this sense, the vanishing of $\Delta_2^X=0,$ together with $\prob_2\cv{-,-}\geq 0,$ is then  identical to LG-type inequality in Ref.~\cite{Bechtold2016} as
\[\prob_2\cv{+,+}\leq\prob_1\cv{+}.\] 
This exemplifes that our type of quantumness witness can be the construction elements of LG-type inequality, or equivalently the testing for LG-type inequality is simply a testing for KC of the underlying process as suggested in Ref.~\cite{Milz2019}, and by our equivalence it is also (partially) a test for commutativity of accessible sub-algebra $\mathcal{A}_{acc}.$

In order to obtain a LG-type inequality  two conditions are required: (MR) macro-realism or the system (in particular the system $S$ in our case) having at least two predetermined distinguishable states, and (NIM) the measurement being noninvasive, i.e. the states before and after measurement has to be identical. Hence violation of LG-type inequality will falsify either (MR) or (NIM), and to further falsify MR one needs to clarify that the measurement is noninvasive. 
Another interpretation is that (MR) and (NIM) are inclusive, i.e. both conditions cannot be false together. 
For the detail on this issue we encourage interested readers to consult Ref.~\cite{Emary2014} and references therein. In our case, from microscopic point of view, it is clear that our induced map $\mathcal{K}_{m}$ is invasive (there can be a back-action in quantum case), so our argument on LG-type inequality is valid only the case when the conditions (MR) and (NIM) are inclusive. The interrogation in Ref.~\citep{Bechtold2016} is noninvasive, as it is of an ideal negative measurement type, i.e.~it leaves the state unchanged for at least one measurement outcome \cite{Emary2014,Leggett1985}, and violation of LG-type inequality will lead to falsification of macro-realism without assumption on inclusion.

However, at the level of measurement statistics, it can be claimed that KC per se is a statistical version of noninvasiveness condition, i.e. the intermediate measurement does not change the statistics of the later measurements \cite{Kofler2013}.
In particular, for the probability defined in Eq.~\eqref{eq:prob_def}, KC can be interpreted as so-called no-signaling-in-time condition in the level of quantum operation, the condition which the probability of a measurement is independent of the preceding measurement. 
Within this concept the violation of KC or its corresponding LG-type inequality will simply imply a violation of macro-realism. 
In addition, it has been proven recently that KC or no-signalling-in-time is equivalent to nondisturbance property of the intermediate measurement, defined when the observable associated with the latter measurement observable is invariant under nonselective measurements at all possible intermediate time steps \cite{Uola2019}.

\subsection{Objectivity of classicality of system}\label{subsec:objectivity}
It should be clear that the commutativity of $\mathcal{A}_{acc}$ does not guarantee that the whole algebra $\mathcal{A}$ of the system is commutative, however, but if $\mathcal{A}_{acc}$ is noncommutative, so is $\mathcal{A}.$ 
This observation leads to  a problem concerning objectivity of the classicality (understood here as commutativity of its algebra) of the system. 
If we assume that $\mathcal{A}$ is noncommutative, and there are two experimenters, agent $a$ and agent $b,$ accessing such a system through different probes, leading to two distinct sub-algebras $\mathcal{A}^a_{acc}$ and $\mathcal{A}^b_{acc},$ they maybe arrive at different conclusions about $\mathcal{A}$. For example, agent $a$ may find that $\mathcal{A}^a_{acc}$ is commutative (in principle, assuming that KC can be tested for all the system states $\rho$), but the agent $b$ may find it noncommutative. In fact, the definition the commutativity of $\mathcal{A}$ is objective but if $\mathcal{A}_{acc} \neq \mathcal{A}$ the conclusion about classicality from the experiment data, in general, cannot be applicable beyond $\mathcal{A}_{acc}$ and the classicality of $\mathcal{A}$ cannot be detectable even if KC for all $\rho$ is established. 
 When one agent makes a positive statement on commutativity, it should be understood as implying that the influence of the system on this agent's probe is ``classical'' in the sense used in this paper. For the quantumness, on the contrary, if $\mathcal{A}_{acc}$ fails to be commutative, the whole algebra $\mathcal{A}$ is noncommutative. 

The classicality (in the inter-subjective sense) of the system can be inferred from the consensus of observations on classicality from all possible observers equipped with different $\mathcal{A}_{acc},$ e.g. $\mathcal{A}$ is classical if all possible $\mathcal{A}_{acc}$ are classical. 
Let us note the fact of existence of a classical sub-algebra $\mathcal{A}_{min}\subset\mathcal{A},$ called minimal sub-algebra \cite{Alicki2008b}, which characterizes the classicality of the whole algebra $\mathcal{A}.$ If one could access this sub-algebra, the commutatvitiy tests thereon would also prove the classicality of the whole system $\mathcal{A}.$ 
However to confirm whether the accessible sub-algebra is minimal is an open problem, since there are still no established criteria to distinguish the minimal sub-algebra from the others \cite{Alicki2008b}. 

As an concrete example of a situation in which two agents interrogate the same system, let us consider a model inspired by physics of nitrogen-vacancy (NV) centre spin qubits in diamond \cite{Dobrovitski_ARCMP13,Rondin_RPP14} coupled to a system of nuclear spins. Under approximation that these spins are noninteracting---which holds up to a timescale given by inverse of typical dipolar coupling, and many experiments are done in this regime---the full algebra of the nuclear system is  generated by all the $I_{j}^k$ spin$-1/2$ operators of $^{13}$C nuclei, with $j = x,y,z$ and $k=1\ldots N$ labelling the nuclei. The Hamiltonian of the qubit probe (the NV enter) and the nuclear system is
\begin{equation}
\hat{H} = \Omega \hat{S}_z\otimes \mathds{I} + \mathds{1}\otimes \omega \sum_{k=1}^{N}\hat{I}_{z}^k + \sum_{k=1}^{N} \sum_{j=x,y,z} \hat{S}_z \otimes A^{j}_{k} \hat{I}^{k}_{j} \,\, , 
\end{equation}
where $\Omega$ is the qubit splitting, $\omega$ is the Zeeman splitting of the nuclei that is proportional to external magnetic field, and $A^{j}_{k}$ are the components of the hyperfine coupling between $k$-th nucleus and the qubit. 

After taking into account that the NV centre spin qubit is typically based on energy levels of spin-1 electronic complex with $S_z = 0$ and $1$, we have the conditional Hamiltonians $\hat{H}_0 = \omega \sum_k \hat{I}^{k}_z$ and $\hat{H}_{1} = \sum_k( \omega \hat{I}^{k}_z +\sum_{j}A^{j}_{k} \hat{I}^{j}_j)$, and in general these two do not commute. One situation in which the accessible algebra is commutative is when $\omega=0$, i.e.~at zero magnetic field. The other case is that of very large magnetic field, for which $\omega \! \gg \! A^{x,y}_k$, and these ``transverse'' hyperfine couplings have vanishing influence on the dynamics of the whole system. After approximating the Hamiltonian by its form with these couplings absent, we have $[\hat{H}_0,\hat{H}_1]\! =\! 0$. Two agents, one working at zero magnetic field, while the other is using a very large field, will then agree that the nuclear system, interrogated by making preparations and measurements on the probe qubit, is commutative, and the statistics of measurements on the qubit is classical---as we have shown by direct calculation of $\prob_n$ in \cite{Sakuldee2019}.
Agents probing the system at finite but not very large magnetic fields will disagree in general, as the above-described nuclear system at finite magnetic field is not classical.

\section{Conclusion and outlook}\label{sec:conclusion}
Let us summarize the main results of the paper. We have considered a quantum probe $P$ that  undergoes pure dephasing due to its interaction with a finite-dimensional quantum system $S$. Dynamics of $P$ is then described by a well-defined sub-algebra of operators of $S,$ i.e. the ``accessible'' algebra on $S$ from the point of view of $P.$ 
We have considered sequences of $n$ measurements on $P$ that induce POVMs described by effects $\hat{E}_m$ on $S,$ and investigated the relationship between Kolmogorov consistency of probabilities of obtaining these sequences that is assumed to hold for {\it an arbitrary} initial state $\rho$ of S, and commutativity of the accessible algebra. We have shown that when $\hat{E}_m$ are all nondegenerate, the Kolmogorov consistency and commutativity are equivalent. Furthermore, when $P$ is a qubit, Kolmogorov consistency for two sets of measurement sequences, corresponding to projections on eigenstates of $\hat{\sigma}_x$ and $\hat{\sigma}_y$ operators of the qubit, is equivalent to the commutativity of the accessible algebra. Observation of breaking of any of the consistency conditions in these cases can thus serve as a witness of noncommutativity of that algebra. 

Let us remark on several possible generalization from our results. One natural way is to consider infinite-dimensional systems. In fact, for countably infinite systems (i.e.~ infinite dimension with discrete structure), our observations can still be applied. It is a challenge to extend the description to system with continuous variables. Another interesting and highly nontrivial case is when the arbitrariness of the initial state $\rho$ is relaxed. In this scenario the problem becomes more complicated from both fundamental and practical points of view. 
For example, the Kolmogorov consistency cannot be transcribed onto a property of the measurement map, and the weaker definition of classicality connected with the initial state should be considered. 
If we also let the measurement maps be different at different times, it will lead to a nontrivial interplay between the initial state together with the pre-measurement history, the measurement maps, and the observables concerning future outcomes. We plan to investigate all these issues in future work, where we will study the connection between Kolmogorov consistency and several types of the commutativity of quantum systems.

\backmatter

\bmhead{Acknowledgments}
We would like to thank Piotr Sza{\'n}kowski for critical reading and valuable suggestions. Fruitful discussions with Simon Milz, Philip Taranto, and Dariusz Chru{\'s}ci{\'n}ski are appreciated. We acknowledge support by the Foundation for Polish Science (IRAP project, ICTQT, contract no. 2018/MAB/5, co-financed by EU within Smart Growth Operational Programme). Initial work on this topic was supported by funds from Polish National
Science Center, Grant No. 2015/19/B/ST3/03152.

\end{document}